\newtheorem{theorem}{Theorem}[section]
\theoremstyle{definition}
\newtheorem{definition}[theorem]{Definition}
\newcommand{\LLL}{\mathfrak{L}}
\newcommand{\RRR}{\mathfrak{R}}
\newcommand{\ARRR}{\mathfrak{R}^t}
\newcommand{\ta}{\mathrm{TA}}
\newcommand{\nta}{\mathrm{NTA}}
\newcommand{\ntaeps}{\mathrm{eNTA}}
\newcommand{\discta}[1]{{#1}_d}
\newcommand{\Actions}{\Sigma}
\newcommand{\Locs}{\mathcal{Q}}
\newcommand{\Acc}{\mathcal{F}}
\newcommand{\Clocks}{\mathcal{C}}
\newcommand{\ResetClocks}{\mathcal{C}_{rst}}
\newcommand{\clk}{c}
\newcommand{\Trans}{\mathcal{T}}
\newcommand{\Guards}{\mathcal{G}}
\newcommand{\Edges}{\mathcal{E}}
\newcommand{\Delays}{\mathcal{D}}
\newcommand{\Vertices}{\mathcal{V}}
\newcommand{\Weights}{\mathcal{W}}
\newcommand{\ClockVal}{\mathcal{V}}
\newcommand{\zerov}{\textbf{0}}
\newcommand{\Naturals}{\mathbb{N}}
\newcommand{\ZNaturals}{\mathbb{N}_0}
\newcommand{\Reals}{\mathbb{R}}
\newcommand{\PReals}{\mathbb{R}_{\geq 0}}
\newcommand{\frc}[1]{\left\{{#1}\right\}}
\newcommand{\powerset}[1]{\mathcal P \left({#1}\right) }
\begin{document}

\title{On the Distance between Timed Automata}
\author{Amnon Rosenmann}
\address{Graz University of Technology, Steyrergasse 30, A-8010 Graz, Austria}
\email{rosenmann@math.tugraz.at}

\date{}
\maketitle

\begin{abstract}
The problem of inclusion of the language accepted by timed automaton $A$ (e.g., the implementation) in the language accepted by $B$ (e.g., the specification) is, in general, undecidable in the class of non-deterministic timed automata.
In order to tackle this disturbing problem we show how to effectively construct deterministic timed automata $\discta{A}$ and $\discta{B}$ that are discretizations (digitizations) of the non-deterministic timed automata $A$ and $B$ and differ from the original automata by at most $\frac{1}{6}$ time units on each occurrence of an event.
Language inclusion in the discretized timed automata is decidable and it is also decidable when instead of $\LLL(B)$ we consider $\overline{\LLL(B)}$, the closure of ${\LLL(B)}$ in the Euclidean topology:
if $\LLL(\discta{A}) \nsubseteq \LLL(\discta{B})$ then $\LLL(A) \nsubseteq \LLL(B)$ and if 
$\LLL(\discta{A}) \subseteq \LLL(\discta{B})$ then $\LLL(A) \subseteq \overline{\LLL(B)}$. 

Moreover, if $\LLL(\discta{A}) \nsubseteq \LLL(\discta{B})$ we would like to know how far away is $\LLL(\discta{A})$ from being included in $\LLL(\discta{B})$. For that matter we define the distance between the languages of timed automata as the limit on how far away a timed trace of one timed automaton can be from the closest timed trace of the other timed automaton.
We then show how one can decide under some restriction whether the distance between two timed automata is finite or infinite.

\keywords{timed automata  \and language inclusion in timed automata \and distance between timed automata.}
\end{abstract}

\section{Introduction}
\label{sec:intro}
Timed automaton (TA) was introduced by Alur and Dill \cite{ta} as an abstract model for real-time systems by extending finite automaton with continuous clocks.
When the $\ta$s are non-deterministic then a fundamental problem of language inclusion is, in general, undecidable, for example, whether the set of timed traces of the $\ta$ representing the implementation is included in that of the specification.
This lead to imposing restrictions on and modifications to non-deterministic TAs in order to achieve decidability (see \cite{bbbb,ta-eps,era,updatable-ta,Ouak_one_clk,Ouak_time_bound,bounded-time,Dec_TA_Survey,bound-det} for a partial list).
Another approach  was to allow robustness in the language \cite{GHJ97} or perturbations in the clocks \cite{purturbed-ta} (see also \cite{BMS15}).
The problem is that by allowing a fixed imprecision, undecidability problems due to working over continuous time do not vanish.

Digitization of timed systems, where basic decision problems like language inclusion are decidable, was considered, for example, in \cite{HMP92,O02,OW03}.
But, as stated in \cite{OW03}, the implementation should be 'closed under digitization' and
the specification should be 'closed under inverse digitization' in order to be able to reduce the language inclusion problem from the continuous world to the discretized one.
In \cite{irta} the authors construct TAs with reset only on integral time and demonstrate the decidability of the language inclusion problem $\LLL(A) \subseteq \LLL(B)$ in case $B$ is an integer reset TA.
In this paper we go further with this approach.
The idea is to work in the setting of discretized time, but without restricting or modifying the definition of a TA. The discretization is over intervals which are smaller than 1 time unit so that although we work in the discretized setting we are able to check for exact occurrence of events also outside integral time. 
For this matter we construct discretized TAs that enable effective comparison of the languages of the original TAs. 
The discretized TA stays within a distance of $\frac{1}{6}$ time units from the original TA (the distance can, in fact, be as small as we like, in the cost of complexity, but that won't improve our knowledge about the inclusion of the languages of the original automata), a goal that is achieved through the introduction of an additional clock, $t$, that measures absolute time.
Now, instead of comparing directly the language of two TAs, a problem which is in general undecidable, we can compare their discretized TAs and have the following (see Theorem~\ref{th:inclusion_gap}): if $\LLL(\discta{A})$, the language of the discretized TA of $A$, is not included in $\LLL(\discta{B})$, the language of the discretized TA of $B$, then the same holds for $\LLL(A)$ with respect to $\LLL(B)$.
If, however, $\LLL(\discta{A}) \subseteq \LLL(\discta{B})$ then $\LLL(A)$ is included in the topological closure of $\LLL(B)$. 

The next natural question, in case $\LLL(A) \nsubseteq \LLL(B)$, is how far away is a timed trace of $\LLL(A)$ from all timed traces of $\LLL(B)$, that is, what is the conformance distance $c(\LLL(A), \LLL(B))$, the distance of $\LLL(A)$ from being conformed with $\LLL(B))$.
When an untimed word of $\LLL(A)$ is not in  $\LLL(B)$ or when a transition in $A$ which is not bound in time is not met with a similar transition in $\LLL(B)$ of the same action label then $c(\LLL(A), \LLL(B)) = \infty$ and the existence of these cases is decidable.
A more challenging question is whether there is a sequence of timed traces of $\LLL(A)$ which tend to diverge from $\LLL(B)$, causing $c(\LLL(A), \LLL(B)) = \infty$.
For example, it may happen that due to imprecisions or delays in a real system, a TA model is changed to allow wider time intervals around actions compared to the more idealistic previous model.
It is then necessary to check whether or not this extended freedom is controlled and the distance between the two TAs stays within a reasonable bound (see \cite{BMS15} regarding an ideal model versus a realistic model).
Moreover, an algorithm based on the approach suggested here may find the timed traces that deviate from the allowed distance between two timed languages.
Further applications for computing the distance may be when safety properties include time restrictions for specific set of timed traces, given as timed automata, and we want to check these timed traces with respect to the implementation model.
In general, in a design of a computerized system, e.g. a network, that contains timing changes, a relaxed equivalence verification may allow bounded perturbations in time that needed to be checked. 

Computing the distance between TAs (or their languages), even between discretized TAs, may be quite complex.
Here we concentrate on the problem of deciding whether the distance is infinite.
It is not clear to us whether this problem is decidable in general, but for a (perhaps) restricted version of it we construct an algorithm that solves it.

\section{Timed Automaton}
\label{sec:ta}
A timed automaton is an abstract model of temporal behavior of real-time systems.
It is a finite automaton with \emph{locations} and \emph{transitions} between them, extended with a finite set of (continuous) clocks defined over $\PReals$.
A transition at time $t$ can occur only if the condition expressed as a \emph{transition guard} is satisfied at $t$.
A transition guard is a conjunction of constraints of the form $\clk \sim n$, where $\clk$ is a clock, $\sim \ \in \{<,\leq, =,\geq, >\}$ and $n \in \ZNaturals = \Naturals \cup \{0\}$.
Each transition is labeled by some \emph{action} $a \in \Actions$ and some of the clocks may be reset to zero. 
In $\nta$, the class of non-deterministic timed automata, and unlike deterministic TAs, it may occur that two transitions from the same location $q$ can be taken at the same time and with the same action but to two different locations $q'$ and $q''$.
\begin{definition}[Timed automaton]
\label{def:ntaeps}
A \emph{non-deterministic timed automaton} $A \in \nta$ is a tuple $(\Locs, q_0, \Acc, \Actions, \Clocks,
\Trans)$, where:
\begin{enumerate}
\item $\Locs$ is a finite set of locations and $q_0 \in \Locs$ is the initial location;
\item $\Acc \subseteq \Locs$ is the set of accepting locations;
\item $\Actions$ is a finite set of transition labels, called actions;
\item $\Clocks$ is a finite set of clock variables;
\item $\Trans \subseteq \Locs \times \Actions \times \Guards \times \powerset{\Clocks} \times \Locs$ is a finite set of transitions of the form $(q, a, g, \ResetClocks, q')$, where:
\begin{enumerate}
\item $q,q' \in \Locs$ are the source and the target locations respectively;
\item $a \in \Actions$ is the transition action;
\item $g \in  \Guards$ is the \emph{transition guard};
\item $\ResetClocks \subseteq \Clocks$ is the subset of clocks to be reset.
\end{enumerate}
\end{enumerate}
\end{definition}
A clock \emph{valuation} $v (\clk)$ is a function $v:\Clocks \to \PReals$. 
We denote by $\ClockVal$ the set of all clock valuations and by
$\textbf{d}$ the valuation which assigns the value $d$ to every clock.
Given a valuation $v$ and $d \in \PReals$, we define $v+d$ to be the valuation $(v+d)(\clk) := v(\clk)+d$ for every $ \clk \in \Clocks$.
The valuation $v[\ResetClocks]$, $\ResetClocks \subseteq \Clocks$, is defined to be $v[\ResetClocks](c) = 0$ for $c \in \ResetClocks$ and $v[\ResetClocks](c) = v(c)$ for $c \notin \ResetClocks$.

The \emph{semantics} of $A \in \nta$ is given by the \emph{timed transition system}
$[[A]] = (S, s_0, \PReals, \Actions, T)$, where:
\begin{enumerate}
\item $S = \{(q,v) \in \Locs \times \ClockVal \}$ is the set of states, with
$s_0 = (q_0, \zerov)$ the initial state;
\item $T \subseteq S \times (\Actions \cup \PReals) \times S$ is the transition relation.
The set $T$ consists of
\begin{enumerate}
\item \emph{Timed transitions (delays):} $(q,v) \xrightarrow{d} (q, v+d)$, where $d \in \PReals$;
\item \emph{Discrete transitions (jumps):} $(q,v) \xrightarrow{a} (q',v')$, where $a \in \Actions$ and there exists a transition $(q, a, g, \ResetClocks, q')$ in $\Trans$, such that the valuation $v$ satisfies the guard $g$ and $v' = v[\ResetClocks]$.
\end{enumerate}
\end{enumerate}

A (finite) \emph{run} $\varrho$ on $A \in \ntaeps$ is a sequence of alternating timed and discrete transitions
of the form 
$$(q_{0}, \zerov) \xrightarrow{d_{1}} (q_{0}, \textbf{d}_1) \xrightarrow{a_{1}} (q_{1}, v_{1}) \xrightarrow{d_{2}} \cdots 
\xrightarrow{d_{k}} (q_{k-1}, v_{k-1} + d_{k}) \xrightarrow{a_{k}} (q_{k}, v_{k}).
$$
The run $\varrho$ on $A$ induces the \emph{timed trace} (\emph{timed word})
$$
\tau = (t_{1}, a_{1}), (t_{2}, a_{2}), \ldots, (t_{k}, a_{k}),
$$
with $a_i \in \Actions$ and $t_{i} = \Sigma_{j=1}^{i} d_i$. 
The language $\LLL(A)$ consists of the set of timed traces that are obtained from the runs that end in accepting locations.
We remark that for simplification of presentation we did not include the location invariants in the definition of timed automata since they are more of a 'syntactic sugar': the invariants of location $q$ are composed of upper bounds to the values of the clocks while being in $q$, but these constraints can be incorporated in the transition guards to $q$ (for the clocks that are not reset at the transitions) and in those transitions that emerge from $q$, thus not affecting $\LLL(A)$.
\section{Augmented Region Automaton}
\label{sec:ARA}
Given a (finite) timed automaton $A$, the region automaton $\RRR(A)$ \cite{ta} is a finite \emph{discretized} version of $A$, such that time is abstracted and both automata define the same untimed language.
Instead of looking at the clocks-space as a continuous space it is partitioned into regions.
Suppose that the maximal integer appearing in the transition guards of $A$ is $M$, then we denote by $\top$ a value of a clock which is greater than $M$.
The regions partition the space of clock valuations into equivalent classes, where two valuations belong to the same equivalent class if and only if they agree on the clocks with $\top$ value and on the integral parts and the order among the fractional parts of the other clocks.
The edges of $\RRR(A)$ are labeled by the transition actions and they correspond to the actual transitions that occur in the runs on $A$.

The \emph{augmented region automaton}, denoted $\ARRR(A)$, is defined as in \cite{R19}.
First, we add to $A$ a clock $t$ that measures absolute time, is never reset to $0$ and does not affect the runs and timed traced of $A$.
Secondly, we want to construct $\ARRR(A)$ in a way that keeps track of absolute time and regain much of the information that is lost when passing from the timed automaton $A$ to the regular region automaton $\RRR(A)$. 
But since $t$ does not appear in the transition guards of $A$, we need not know the exact value of the integral part of $t$ but just how much time passes between two consecutive transitions.
Thus, we assign $t$ in $\ARRR(A)$ only two time-regions: $\{0\}$ and $(0,1)$.
However, in order to keep track of the absolute time that passes, each edge is assigned a 'weight', the time difference in the integral part of $t$ between the target and the source regions.
The ordering among the fractional part of the clocks does, however, take that of $t$ into account. 
Overall, the number of regions of $\ARRR(A)$ is clearly finite (although potentially exponentially large).
\begin{definition}
	\label{def:aug_region_automaton}
	Given a non-deterministic timed automaton $A$ with clocks $x_1, \ldots, x_s$ extended with absolute-time clock $t$, a corresponding augmented region automaton $\ARRR(A)$ is a tuple $(\Vertices, v_0, \Edges, \Actions, \Weights^*)$, where:
	\begin{enumerate}
		\item $\Vertices$ is the set of vertices.
		Each vertex is a triple $(q, {\bf n}, \Delta)$,	where $q$ is a location of $A$ and $r=({\bf n}, \Delta)$ is a region, with ${\bf n} = (n_1, \ldots, n_s) \in \{0, 1, \ldots, M, \top \}^{s}$
		consisting of the integral parts of the clocks $x_1, \ldots, x_s$
		and $\Delta$ is the simplex (hyper-triangle) with vertices in the lattice $\ZNaturals^{s+1}$ of all points that satisfy a fixed ordering of the fractional parts of the clocks $t=x_0, x_1, \ldots, x_s$:
		\begin{equation}
		\label{eq:simplex}
		0 \preceq_0 \frc{x_{i_0}} \preceq_1 \frc{x_{i_1}} \preceq_2 \cdots \preceq_s \frc{x_{i_s}} < 1,
		\end{equation} 
		where $\preceq_i \, \in \{=, < \}$.
		\item $v_0 = (q_0, {\bf 0}, {\bf 0})$ is the initial vertex, where $q_0$ is the initial location of $A$ and $({\bf 0}, {\bf 0})$ indicates that all clocks have value $0$.
		\item $\Edges$ is the set of edges.
		There is an edge $(q, r) \xrightarrow{a} (q',r')$ if and only if there is a run on $A$ containing $(q, v) \xrightarrow{d} (q, v + d) \xrightarrow{a} (q', v')$, such that, the clock valuation $v$ is in the region $r$ and $v'$ is in $r'$.
		\item $\Actions$ is the finite set of actions.
		\item $\Weights^*$ is the set of weights $m$ on the edges calculated as
		 $m = \lfloor t_1 \rfloor - \lfloor t_0 \rfloor \in [0..M]$, where $\lfloor t_1 \rfloor$ ($\lfloor t_0 \rfloor$) is the integral part of $t$ in the target (source) location in a corresponding run on $A$.
		 There may be more than one edge between two vertices of $\ARRR(A)$, each one with a distinguished weight.
		 A weight $m$ may be marked as $m^*$, representing infinitely-many consecutive values $m$, $m+1$, $m+2, \ldots$ as weights between the same two vertices, for example when the regular clocks passed the maximal value $M$.
	\end{enumerate}
\end{definition}
An augmented region automaton can be seen in Fig.~\ref{fig:disc_ta}(b) (the example is taken from \cite{ta}).

\section{Discretized Timed Automaton}
\label{sec:Approx}
After constructing the augmented region automaton $\ARRR(A)$, we turn it into a deterministic timed automaton $\discta{A}$ which discretizes (digitizes) $A$.
\begin{definition}
	\label{def:appta}
	A \emph{discretized timed automaton} $\discta{A}$ is a timed automaton constructed from the augmented region automaton $\ARRR(A)$ in the following way.
	\begin{enumerate}
		\item The directed graph structure of locations and edges of $\discta{A}$ is the same as that of $\ARRR(A)$.
		\item The transition labels (actions) are also as in $\ARRR(A)$.
		\item There is a single clock in $\discta{A}$, namely $t$, which is reset on each transition.
		\item The transition guards of $\discta{A}$ are of the following form.
		Let $e = v_0 \to v_1$ be an edge of $\ARRR(A)$, let $w(e)$
		be its weight and let $\{t_0\}, \{t_1\} \in [0,1)$ be any fractional parts of $t$ in the source and target regions.
		Let
		$$\delta = \frac{1}{2}(\lceil \{t_1\} \rceil - \lceil \{t_0\} \rceil) \in \{-\frac{1}{2},0,\frac{1}{2}\},$$
		where $\lceil \{t_i\} \rceil \in \{0,1\}$ is the ceiling function applied to $t_i$.
		Then, we set the transition guard of the corresponding edge of $\discta{A}$ to be
		$$t =  w(e) + \delta.$$
		In case of a weight $w(e) = m^*$ then the transition guard is
		$$t \geq m+\delta.$$
	\end{enumerate}
\end{definition}
A discretized timed automaton can be seen in Fig.~\ref{fig:disc_ta}(c).

We remark that the fact that the transition guards of $\discta{A}$ are over $\frac{1}{2} \ZNaturals$ and not over $\ZNaturals$ need not bother us since the standard definition of timed automata holds also over the rational numbers. Indeed, by letting all clocks run twice as fast and multiplying by 2 all values in the constraints of the transition guards, we end up in an automaton over the integers.
\begin{figure}[]
	\centering
	\scalebox{0.5}{ \input 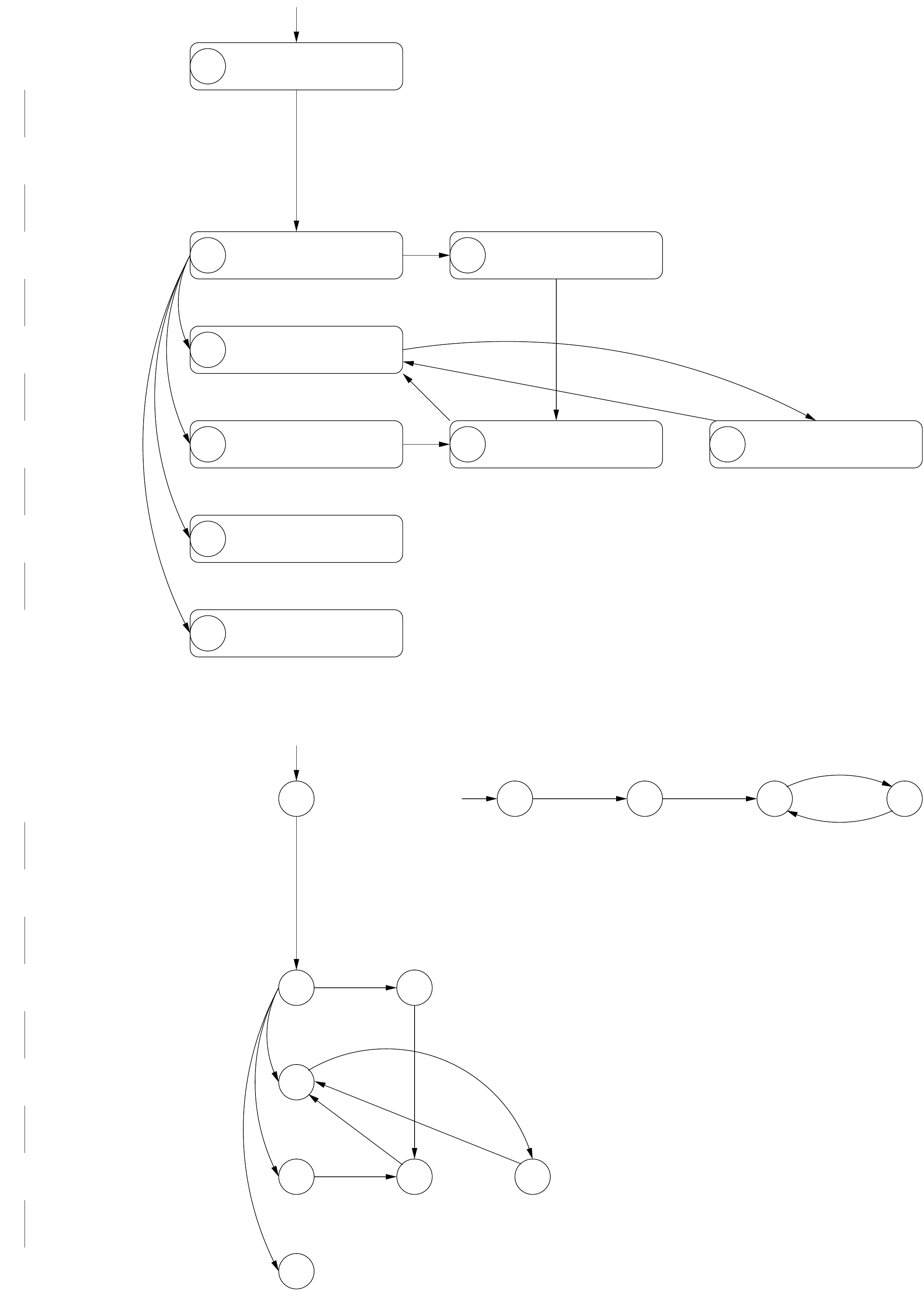_t }
	\caption{a) $A \in \ta$; b) $\ARRR(A)$, the augmented region automaton of $A$; c) $\discta{A}$, the discretized timed automaton with $\Delta=0.5$.}
	\label{fig:disc_ta}
\end{figure}

\section{The Conformance Distance}
\label{sec:inc_gap}
We want to define a metric on the set of timed traces in order to define (conformance) distance between timed languages. 
\begin{definition}
	Given a set $T$ of timed traces over the same alphabet $\Sigma$, we define the $\infty$-metric or max-metric $d$ on $T$ in the following way. Given two timed traces
	\begin{align*}
		\tau_1 &=(t_1^{\tau_1}, a_1^{\tau_1}), (t_2^{\tau_1}, a_{2}^{\tau_1}),\ldots,(t_m^{\tau_1}, a_m^{\tau_1}), \\
		\tau_2 &=(t_1^{\tau_2}, a_1^{\tau_2}), (t_2^{\tau_2}, a_{2}^{\tau_2}),\ldots,(t_n^{\tau_2}, a_n^{\tau_2}),
	\end{align*}
the distance between $\tau_1$ and $\tau_2$ is
\begin{equation*}
	d(\tau_1, \tau_2) = \begin{cases}
	\infty, &\quad \mathrm{if} \; m \neq n \; \mathrm{or} \; a_i^{\tau_1} \neq a_i^{\tau_2} \; \mathrm{for} \; \mathrm{some} \; i,\\
	\max_i{| t_i^{\tau_1}-t_i^{\tau_2} |}, &\quad \mathrm{otherwise.}
	\end{cases}
\end{equation*}
\end{definition}
The above metric over the set of traces induces inclusion relation on timed languages (languages of timed automata).
\begin{definition}
	Given two timed languages $\LLL_1$ and $\LLL_2$, $\LLL_1$ is $\varepsilon$-included in $\LLL_2$, denoted $\LLL_1 \subseteq_{\varepsilon} \LLL_2$, if for every timed trace $\tau_1 \in \LLL_1$ there exists a timed trace $\tau_2 \in \LLL_2$ such that $d(\tau_1, \tau_2) \leq \varepsilon$. \\
	The \emph{conformance distance} $c(\LLL_1, \LLL_2)$ between $\LLL_1$ and $\LLL_2$ is
	$$c(\LLL_1, \LLL_2) = \inf\{\varepsilon \, : \,  \LLL_1 \subseteq_{\varepsilon} \LLL_2\},$$
	that is,
	$$c(\LLL_1, \LLL_2) = \sup_{\tau_1 \in \LLL_1} \inf_{\tau_2 \in \LLL_2} d(\tau_1, \tau_2) = \sup_{\tau_1 \in \LLL_1} d(\tau_1, \LLL_2).$$
	The distance $d(\LLL_1,\LLL_2)$ between $\LLL_1$ and $\LLL_2$ is 
	\begin{equation}
	\label{qe:lang_dist}
	d(\LLL_1,\LLL_2) = \max\{c(\LLL_1, \LLL_2), c(\LLL_2, \LLL_1)\}.
	\end{equation}
\end{definition}

In case of a finite conformance distance $n$ that is reached as a limit of a sequence of distances, we can denote it as $n^{+}$ (for a limit from above) or as $n^{-}$ (for a limit from below). Thus, $\LLL_1 \subseteq \LLL_2$ if and only if $c(\LLL_1,\LLL_2) = 0$.
But when $c(\LLL_1,\LLL_2) = 0^{+}$ then $\LLL_1 \nsubseteq \LLL_2$ but $\LLL_1 \subseteq \overline{\LLL_2}$, where $\overline{\LLL_2}$ is the closure of $\LLL_2$ in the Euclidean topology, defined as follows.
Fixing an untimed word $w \in \Sigma^*$ of length $n$, let $\LLL_2(w)$ be the timed traces in $\LLL_2$ whose untimed word is $w$ and let $\Reals ^n_w$ be a copy of $\Reals^n$ indexed by $w$.
There is a natural embedding $\iota: \LLL_2(w) \to \Reals ^n_w$.
Then, $c(\LLL_1, \LLL_2)=0^{+}$ implies that $\iota(\LLL_1) \subseteq \overline{\iota(\LLL_2)}$, where $\iota(\LLL_j) = \bigcup_{w \in \Sigma^*}\iota(\LLL_j(w))$, $j=1,2$, and  $\overline{S}$ is the closure of $S$ in the Euclidean topology. 
	
Subadditivity (triangle inequality) holds for the conformance distance:
$$c(\LLL_1, \LLL_3) \leq c(\LLL_1, \LLL_2) + c(L_2, L_3).$$
\begin{theorem}
	\label{th:integer_gap}
	Let $A,B \in \nta$. Then $c(\LLL(A), \LLL(B)) \in \frac{1}{2}\ZNaturals \cup \{\infty\}$.
\end{theorem}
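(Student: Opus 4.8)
The plan is to reduce the computation of $c(\LLL(A),\LLL(B))$ to a finite amount of polyhedral geometry over integral data, and then to read half-integrality off the combinatorics of that data.

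\emph{Reduction to one untimed word.} Since $d(\tau_1,\tau_2)=\infty$ whenever $\tau_1,\tau_2$ carry different untimed words, for $\tau_1\in\LLL(A)$ with untimed word $w$ we have $d(\tau_1,\LLL(B))=\infty$ if $w\notin\untime(\LLL(B))$ and $d(\tau_1,\LLL(B))=d(\tau_1,\LLL(B)(w))$ otherwise, where $\LLL(B)(w)$ denotes the set of timed traces of $B$ over $w$. As $\untime(\LLL(A))$ and $\untime(\LLL(B))$ are regular, one can decide whether $\untime(\LLL(A))\subseteq\untime(\LLL(B))$; if not, $c(\LLL(A),\LLL(B))=\infty$. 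Otherwise $c(\LLL(A),\LLL(B))=\sup_w c_w$ with $c_w=\sup_{\tau_1\in\LLL(A)(w)}d(\tau_1,\LLL(B)(w))$, the supremum ranging over the countable set $\untime(\LLL(A))$. Since $\tfrac12\ZNaturals\cup\{\infty\}$ is a closed subset of $[0,\infty]$, it suffices to show each $c_w\in\tfrac12\ZNaturals\cup\{\infty\}$.

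\emph{Geometry of the fibres.} Fix a word $w$ of length $n$; a timed trace over $w$ is a point $(t_1,\dots,t_n)\in\Reals^n$ with $0\le t_1\le\cdots\le t_n$, and we put $t_0:=0$. For each of the finitely many symbolic runs of $A$ over $w$ (sequences of $n$ transitions forming an accepting path), the timed traces realising that run form a polyhedron cut out by $0\le t_1\le\cdots\le t_n$ together with constraints $t_i-t_j\sim m$ ($\sim\,\in\{<,\le,=,\ge,>\}$, $m\in\ZNaturals$), where $j$ is the index of the last reset of the clock named in the guard and $m$ its constant. Thus $\LLL(A)(w)=\bigcup_l P_l$ and $\LLL(B)(w)=\bigcup_k Q_k$ are finite unions of such difference (DBM) polyhedra. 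Taking closures changes none of the distances involved, and the closure of a DBM polyhedron with integral data is again a DBM polyhedron all of whose vertices lie in $\ZNaturals^n$ (solve the tight difference equations along a spanning tree rooted at $t_0=0$). Hence $c_w=\max_l\sup_{p\in\overline{P_l}}d\bigl(p,\textstyle\bigcup_k\overline{Q_k}\bigr)$, writing $d(p,S)=\inf_{q\in S}\max_i|p_i-q_i|$ for $p,S$ in $\Reals^n$.

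\emph{The $\ell_\infty$-distance to a DBM polyhedron.} For one DBM polyhedron $\overline{Q_k}$, $d(p,\overline{Q_k})$ is the least $r\ge 0$ for which the difference system defining $\overline{Q_k}$ together with the box constraints $p_i-r\le x_i\le p_i+r$ ($1\le i\le n$; $x_0=0$ fixed) is feasible. By the shortest-path / no-negative-cycle (Farkas) criterion for difference systems this least $r$ equals $\max\bigl(\{0\}\cup\{\,((p_i-p_j)-W^k_{ij})/\nu_{ij}\;:\;0\le i,j\le n\,\}\bigr)$, where the $W^k_{ij}\in\Integers\cup\{\infty\}$ are the canonical (shortest-path closed) difference bounds of $\overline{Q_k}$, $p_0:=0$, and $\nu_{ij}:=|\{i,j\}\setminus\{0\}|\in\{1,2\}$ (a difference against the pinned coordinate $t_0$ can be relaxed only from one side, an ``internal'' difference from both). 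In particular $p\mapsto d(p,\overline{Q_k})$ is piecewise affine, takes values in $\tfrac12\ZNaturals$ at integral points, and has affine pieces of the form $\tfrac1{\nu}((p_i-p_j)-W^k_{ij})$; the same then holds for $f:=\min_k d(\,\cdot\,,\overline{Q_k})$.

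\emph{Half-integrality of the supremum --- the main obstacle.} If $\overline{P_l}$ is unbounded and $f$ is unbounded on it (this happens exactly when $A$ is not time-bounded along a part of $w$ that $B$ cannot match, and is detectable combinatorially) then the value is $\infty$. Otherwise $\sup_{p\in\overline{P_l}}f$ is attained at some $p^*$, and there enough constraints are tight to determine the pair $(p^*,r)$ with $r=f(p^*)$: certain facets $t_a-t_b=d_{ab}$ of $\overline{P_l}$ with $d_{ab}\in\Integers$, and certain ``distance-tightness'' relations $t_i-t_j=W^k_{ij}+\nu_{ij}\,r$ expressing that the box of radius $r$ about $p^*$ just touches the relevant $\overline{Q_k}$'s. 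These relations span a graph on $\{0,1,\dots,n\}$; being determining, it contains a cycle, and summing the relations around the cycle eliminates all the $t_i$ and leaves $(\text{integer})+\lambda r=0$, where $\lambda$ is the signed sum of the $\nu_{ij}$ along the cycle. The crux --- and where I expect the real work to lie --- is to show $|\lambda|=2$: a cycle using a distance-tightness edge at all must pass through the pinned vertex $0$, so of its edges only the (at most two) edges at $0$ contribute $\nu=1$ while every other edge contributes an even $\nu$, forcing $\lambda$ even, and a further argument (choosing $p^*$ so that the touching is witnessed by $t_0$-adjacent constraints, or passing to the corner-point/region refinement of the $B$-side where every relevant bound becomes a bound on a single clock, or a direct analysis of the possible cycles) must rule out $|\lambda|\ge 4$. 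Then $2r\in\Integers$, i.e.\ $r\in\tfrac12\ZNaturals$. Combining the four steps gives $c(\LLL(A),\LLL(B))\in\tfrac12\ZNaturals\cup\{\infty\}$; the decorations $n^{+}$, $n^{-}$ only record whether the value is attained or approached and do not affect this membership. The delicate point is precisely the control of $\lambda$: the coefficient $2r$ carried by internal differences could a priori combine, around a cycle, with integral facet relations to produce a critical value whose denominator exceeds $2$, so the argument has to exploit that $t_0$ is pinned in order to prevent this.
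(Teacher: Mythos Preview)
Your approach is substantially more detailed than the paper's. The paper's proof is really only a sketch: it fixes a single pair of paths $(\gamma^A,\gamma^B)$ (which quietly drops the inner minimum over $B$-paths), appeals to the integrality of the timestamp intervals, and then simply asserts that ``the solution lies in $\frac12\ZNaturals$ because it can be shown that for any other solution the timed traces can be shifted so that we are nearer $\frac12\ZNaturals$'', offering only a one-dimensional example by way of justification. Your reduction to a fixed untimed word $w$, the DBM description of the fibres, the explicit $\ell_\infty$-distance formula of step~3, and the observation that $\frac12\ZNaturals\cup\{\infty\}$ is closed in $[0,\infty]$ are all correct and go well beyond what the paper provides.

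The gap, however, is exactly where you place it, and it is genuine. Step~4 is not a proof, and the one concrete sub-claim you make there --- ``a cycle using a distance-tightness edge at all must pass through the pinned vertex $0$'' --- is false: a distance-tightness edge between $i,j\ge 1$ (hence $\nu_{ij}=2$) can close up into a cycle via $P_l$-facet edges without ever touching~$0$. Such a cycle happens to give $|\lambda|=2$, which is harmless, but this shows that the mechanism you describe is not the right one. The real obstruction you must rule out is a determining cycle carrying \emph{several} distance-tightness edges with $\nu=2$, coming from distinct $Q_k$'s that are simultaneously active at the optimum; this would a priori yield $|\lambda|\ge 4$ and hence only $r\in\frac14\Integers$. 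None of your three suggested fixes (special choice of $p^*$, region refinement on the $B$-side, ``direct analysis of the possible cycles'') is actually carried out, so the half-integrality of the optimal value remains unproved --- which, to be fair, is also the status of the paper's own argument.
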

\begin{proof}
	 Clearly, the conformance distance $c(\LLL(A), \LLL(B))$ can be $\infty$, for example, when the untimed language of $A$ contains a word that is not in the untimed language of $B$.
	 Suppose now that $\delta = c(\LLL(A), \LLL(B)) < \infty$.
	 It suffices to show the following. Given a path $\gamma^A$ in $A$ and another path $\gamma^B$ in $B$, where both define the same untimed trace (identical sequence of actions), let $T^A$ ($T^B$) be the set of all timed traces along $\gamma^A$ ($\gamma^B$).
	 We need to show that
	 \begin{equation}
	 \label{eq:dist_path}
	 \sup_{\tau^A \in T^A} \inf_{\tau^B \in T^B} d(\tau^A, \tau^B) \in \frac{1}{2}\ZNaturals.
	 \end{equation}
	By \cite{R19}, the timestamp of each of the events along $\gamma^A$ and $\gamma^B$ is an interval of the form $(m,n)$, $(m,n]$, $[m,n)$ or $[m,n]$, where $m \leq n$ and $m \in \ZNaturals$, $n \in \ZNaturals \cup \{\infty\}$.
	This can be shown by writing equalities and inequalities over the integers and variables $z_i$, where $z_i$ represents the time of the $i$-th event along the path.
	Then \eqref{eq:dist_path} becomes an optimization problem over the integers and variables for the events along $\gamma^A$ as well as for those along $\gamma^B$.
	The solution lies in $\frac{1}{2}\ZNaturals$ because it can be shown that for any other solution the timed traces can be shifted so that we are nearer $\frac{1}{2}\ZNaturals$.
	In fact, it is quite clear that the solution should be looked for when considering the integral end-points of the event intervals. The solution is, in general, in $\frac{1}{2}\ZNaturals$ and not in $\ZNaturals$ as can be seen from the following example. Suppose that an event of $\tau^A$ occurs at time $0 < t < 1$ where the corresponding event in $B$ can occur at time $0$ or at time $1$. Then, the maximal time difference, namely $\frac{1}{2}$, occurs when we choose the event of $\tau^A$ to be at time $t=\frac{1}{2}$.
\end{proof}
	
By the way they are defined, the untimed runs on the augmented region automaton $\ARRR(A)$, as well as those on the discretized timed automaton $\discta{A}$, are identical to the untimed runs on $A$.
The runs differ in the exact time on which each event occurs.
When the absolute time of occurrence of an event is $t_0 \in \ZNaturals$ then $\discta{A}$ agrees with $A$.
When $t_0 = n+\varepsilon$, $n \in \ZNaturals$, $0 < \varepsilon <1$ then the time of the event on $\discta{A}$ is set to be $n + \frac{1}{2}$, thus, the time difference is less than $\frac{1}{2}$ time units.
The fact that the clock $t$ of $\discta{A}$ is synchronized with the clock $t$ that was added to $A$ to measures absolute time guarantees that the cumulative error does not increase over time but remains bounded by $\frac{1}{2}$. That is, $\discta{A}$ is a $\frac{1}{2}$-time-unit approximation of $A$: there exits a surjective mapping
\begin{equation}
	\label{eq:proj}
	\pi : \LLL(A) \twoheadrightarrow \LLL(\discta{A}),
\end{equation}
such that if $\pi(\tau) = \tilde{\tau}$ then $d(\tau, \tilde{\tau}) < \frac{1}{2}$.
We showed that the following holds.
\begin{theorem}
	\label{th:approx_ta}
	$d(\LLL(A), \LLL(\discta{A})) \leq \frac{1}{2}$.
\end{theorem}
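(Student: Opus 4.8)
The plan is to unfold $d(\LLL_1,\LLL_2)=\max\{c(\LLL_1,\LLL_2),c(\LLL_2,\LLL_1)\}$ and bound each of the two conformance distances by $\tfrac12$; in both directions the bound comes from producing, for a trace of one automaton, a trace of the other at distance $<\tfrac12$. The bridge is the path correspondence behind the projection $\pi$ of \eqref{eq:proj}: a run $\varrho$ on $A$ with timed trace $\tau=(t_1,a_1),\dots,(t_k,a_k)$, $t_0:=0$, traverses a sequence of region classes that, since $\ARRR(A)$ and $\discta{A}$ share the same underlying graph, is also a path $\gamma$ in $\discta{A}$; following $\gamma$ in $\discta{A}$---where $t$ is reset after every transition and the $i$-th guard is $t=w(e_i)+\delta_i$ (or $t\ge m_i+\delta_i$ on an $m^*$-edge)---produces a timed trace $\tilde\tau=(\tilde t_1,a_1),\dots,(\tilde t_k,a_k)$ over the same action word, and $\pi(\tau):=\tilde\tau$; conversely, by Definition~\ref{def:aug_region_automaton}, every accepting path of $\discta{A}$ arises in this way from a run of $A$.

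I would first prove $c(\LLL(A),\LLL(\discta{A}))\le\tfrac12$ by a telescoping computation along $\gamma$. By Definition~\ref{def:aug_region_automaton} the weight of the $i$-th edge is $w(e_i)=\lfloor t_i\rfloor-\lfloor t_{i-1}\rfloor$, the integral time elapsed on the $i$-th delay, and by Definition~\ref{def:appta} the shift is $\delta_i=\tfrac12\bigl(\lceil\frc{t_i}\rceil-\lceil\frc{t_{i-1}}\rceil\bigr)$; since $t$ is reset on each transition, $\tilde t_i=\sum_{j=1}^{i}\bigl(w(e_j)+\delta_j\bigr)$, and both sums telescope (using $t_0=0$) to
\begin{equation*}
\tilde t_i=\lfloor t_i\rfloor+\tfrac12\lceil\frc{t_i}\rceil .
\end{equation*}
Thus $\tilde t_i=t_i$ when $\frc{t_i}=0$ and $\tilde t_i=\lfloor t_i\rfloor+\tfrac12$ when $0<\frc{t_i}<1$, so $|\tilde t_i-t_i|=|\tfrac12-\frc{t_i}|<\tfrac12$ in every case; as the action words agree, $d(\tau,\pi(\tau))=\max_i|\tilde t_i-t_i|<\tfrac12$, and taking $\inf$ over $\LLL(\discta{A})$ and then $\sup$ over $\tau\in\LLL(A)$ gives the claim. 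An $m^*$-edge is handled identically, setting $\tilde t_i-\tilde t_{i-1}:=(\lfloor t_i\rfloor-\lfloor t_{i-1}\rfloor)+\delta_i$: this is $\ge m_i+\delta_i$, hence meets the guard, and it is $\ge 0$ because $\delta_i=-\tfrac12$ forces $\frc{t_{i-1}}>0=\frc{t_i}$ with $t_{i-1}\le t_i$, so an integral unit must elapse; the identity above is unchanged.

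Next I would prove $c(\LLL(\discta{A}),\LLL(A))\le\tfrac12$, using that $\pi$ is onto. Given $\tilde\tau\in\LLL(\discta{A})$ realized along an accepting path $\gamma$, pick a run $\varrho$ of $A$ whose region sequence is $\gamma$ (available by Definition~\ref{def:aug_region_automaton}) and, on each $m^*$-edge of $\gamma$, tune the integral time $\varrho$ lets elapse so that $\lfloor t_i\rfloor-\lfloor t_{i-1}\rfloor$ matches $\tilde t_i-\tilde t_{i-1}-\delta_i$; then the trace $\tau$ of $\varrho$ has $\pi(\tau)=\tilde\tau$, whence $d(\tilde\tau,\tau)=d(\tau,\pi(\tau))<\tfrac12$ by the previous step. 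Combining the two bounds via \eqref{qe:lang_dist} gives $d(\LLL(A),\LLL(\discta{A}))\le\tfrac12$.

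The step I expect to be the main obstacle is this reverse bound, and inside it the treatment of $m^*$-weights: on such an edge the equality guard of $\discta{A}$ becomes $t\ge m+\delta$, which can be satisfied at times outside $\delta+\ZNaturals$, so $\pi$ need not be literally onto; one must then argue either that such times cannot occur on an accepting $\discta{A}$-run, or---more robustly---reconstruct directly an $A$-trace within distance $\le\tfrac12$ of $\tilde\tau$, distributing the elapsed times over the concrete and $m^*$ edges of $\gamma$ and choosing the region-respecting fractional parts so that the accumulated error stays within $\tfrac12$ at every prefix. Everything else---the telescoping identity, and the length-$0$ accepting run, for which the distance is $0$---is routine.
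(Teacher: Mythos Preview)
Your argument is correct and follows the same route as the paper, which establishes the theorem in the paragraph immediately preceding its statement via the surjection $\pi$ of \eqref{eq:proj} and the observation that each event time $t_i=n+\varepsilon$ is snapped to $n+\tfrac12$. Your telescoping identity $\tilde t_i=\lfloor t_i\rfloor+\tfrac12\lceil\{t_i\}\rceil$ makes precise what the paper calls ``synchronization via the clock $t$,'' and your handling of the $m^*$-edges---where the paper's assertion that $\pi$ is literally onto is in fact slightly loose---is more careful than the original.
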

Since $t$ is reset only on values in $\frac{1}{2}\ZNaturals$ then $\discta{A}$ is determinizable (see \cite{irta}, \cite{bbbb}).
In fact, since $t$ is reset at each transition, we can remove it altogether to obtain an action-labeled, weighted directed graph.
The determinization algorithm is then straightforward by searching the graph in a breadth-first manner, unifying edges of the same source location that agree on their labels: $(a,t)$, $a$ - action, $t$ -time, followed by unifying the target locations.
The number of vertices, however, may grow exponentially.

\section{Computing the Conformance Distance}
Since $\discta{A}$ is determinizable, we can gain information about the relation between the languages of two timed automata by comparing their discretized languages.

Note that by the way the distance between languages is defined, it is clear that it refers to languages which are supposed to be (almost) identical or that one language is assumed to be (almost) included in the other, but this is normally the case in equivalence verification or when comparing the implication language with its specification.
Note that even if the untimed languages of two TAs are identical, it is enough that there exists a cycle, in which the timed languages do not agree, then by repeatedly taking this cycle the distance between the timed traces of the two TAs may grow indefinitely, resulting in a distance of $\infty$, and it is of interest to be able to recognize when this phenomenon occurs.
Thus, it seems that since the distance between $A$ and its discretized timed automaton $\discta{A}$ is only $\frac{1}{2}$ time units, we may not lose much by comparing $\discta{A}$ instead of $A$ with another TA.
In fact, in order to be more precise in the computation of the distance between two languages we need to make the basic discretization interval, denoted $\Delta$, shorter than $\frac{1}{2}$ time units.
By setting $\Delta = \frac{1}{n}$ we get that $d(\LLL(A), \LLL(\discta{A})) < \frac{1}{n}$, thus we can make $\LLL(A)$ and $\LLL(\discta{A})$ as close to one another as we like (of course, in the expense of complexity). 
However, it turns out that it suffices to choose $\Delta = \frac{1}{6}$ in order to get the maximal precision about $d(\LLL(A), \LLL(B))$.

For our convenience, since we prefer not to work with small fractions we accelerate the clocks to run at triple speed. That is, from now on, given the timed automata $A$ and $B$ under test, we first multiply by $3$ all the numbers that appear in the transition guards, so they all belong to $3\ZNaturals$.
Then we proceed as before: we construct the region automata with respect to basic regions of size 1 time unit and the discretized automata with respect to $\Delta = \frac{1}{2}$.
Now we have,
\begin{equation}
\label{eq:gap_set}
c(\LLL(A), \LLL(B)) \in \frac{3}{2}\ZNaturals \cup \{\infty\}
\end{equation}
and
$$
d(\LLL(A), \LLL(\discta{A})) \leq \frac{1}{2}, \quad d(\LLL(B), \LLL(\discta{B})) \leq \frac{1}{2}.
$$
\begin{theorem}
	\label{th:inclusion_gap}
	Let $A,B \in \nta$ with clocks running at triple speed and let $\discta{A}, \discta{B}$ be their discretized timed automata with respect to $\Delta = \frac{1}{2}$.
	Then
	$$
	|\, c(\LLL(A), \LLL(B)) - c(\LLL(\discta{A}), \LLL(\discta{B})) \, | \leq \frac{1}{2}
	$$
	and $c(\LLL(A), \LLL(B))$ is known in case $c(\LLL(\discta{A}), \LLL(\discta{B}))$ is known.
In particular:
$$
	\LLL(\discta{A}) \nsubseteq \LLL(\discta{B}) \; \Rightarrow \; \LLL(A) \nsubseteq \LLL(B)
$$
and
$$
	\LLL(\discta{A}) \subseteq \LLL(\discta{B}) \; \Rightarrow \; \LLL(A) \subseteq \overline{\LLL(B)},
$$		
so that the language inclusion problem between $\LLL(A)$ and the topological closure of $\LLL(B)$ is decidable. 
\end{theorem}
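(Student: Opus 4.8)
The plan is to derive everything from the two approximation facts already in hand, namely $d(\LLL(A),\LLL(\discta{A}))\le\frac12$ and $d(\LLL(B),\LLL(\discta{B}))\le\frac12$ (Theorem~\ref{th:approx_ta}, applied with the tripled clocks), together with the triangle inequality for the conformance distance and the quantization statement $c(\LLL(A),\LLL(B))\in\frac32\ZNaturals\cup\{\infty\}$ from~\eqref{eq:gap_set}. First I would record the one-sided estimates obtained from subadditivity: for any timed trace $\tau^A\in\LLL(A)$ with image $\tilde\tau^A=\pi(\tau^A)\in\LLL(\discta{A})$, and any $\tilde\tau^B\in\LLL(\discta{B})$ with a preimage $\tau^B\in\LLL(B)$ under the corresponding projection, one has $d(\tau^A,\tau^B)\le d(\tau^A,\tilde\tau^A)+d(\tilde\tau^A,\tilde\tau^B)+d(\tilde\tau^B,\tau^B)< \frac12+d(\tilde\tau^A,\tilde\tau^B)+\frac12$, and symmetrically in the other direction. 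Taking the appropriate suprema and infima over traces, and using that $\pi$ is surjective so that every $\tilde\tau^A$ arises from some $\tau^A$ (and likewise on the $B$ side), this yields
$$
|\,c(\LLL(A),\LLL(B)) - c(\LLL(\discta{A}),\LLL(\discta{B}))\,| \le 1 .
$$

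To sharpen the bound from $1$ to $\frac12$ I would combine this estimate with the quantization. By~\eqref{eq:gap_set} the true distance $c(\LLL(A),\LLL(B))$ lies in $\frac32\ZNaturals\cup\{\infty\}$, and the same argument applied to the discretized automata (their guards are in $\frac12\ZNaturals$, i.e. in $\ZNaturals$ after the uniform doubling, so $c(\LLL(\discta{A}),\LLL(\discta{B}))$ lies in a discrete set as well) forces the two quantities to sit on well-separated grids. Since the raw triangle-inequality gap is strictly less than $1$ — the two $\frac12$ contributions from $\pi$ are \emph{strict} inequalities except in the degenerate integral case — and the admissible values of $c(\LLL(A),\LLL(B))$ are spaced $\frac32$ apart, the only way the inequality $|\cdot|\le 1$ can hold is if the discretized distance determines $c(\LLL(A),\LLL(B))$ uniquely; a short case analysis on the position of $c(\LLL(\discta{A}),\LLL(\discta{B}))$ relative to the $\frac32\ZNaturals$ grid then pins down $c(\LLL(A),\LLL(B))$ exactly and, in particular, gives $|\,c(\LLL(A),\LLL(B))-c(\LLL(\discta{A}),\LLL(\discta{B}))\,|\le\frac12$. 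I expect this bookkeeping — tracking exactly when the strictness in $d(\tau,\pi(\tau))<\frac12$ is lost and ensuring the grids really are offset correctly after the triple-speed rescaling — to be the main obstacle, and the place where the specific choice $\Delta=\frac12$ (equivalently $\frac16$ before acceleration) is actually used.

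The two displayed implications then follow as corollaries. If $\LLL(\discta{A})\not\subseteq\LLL(\discta{B})$ then $c(\LLL(\discta{A}),\LLL(\discta{B}))>0$, hence by the quantization on the discretized side $c(\LLL(\discta{A}),\LLL(\discta{B}))\ge\frac12$ (in doubled units, $\ge1$), so by the sharpened estimate $c(\LLL(A),\LLL(B))>0$, i.e. $\LLL(A)\not\subseteq\LLL(B)$. Conversely, if $\LLL(\discta{A})\subseteq\LLL(\discta{B})$ then $c(\LLL(\discta{A}),\LLL(\discta{B}))=0$, whence $c(\LLL(A),\LLL(B))\le\frac12$; since a positive true distance would have to be at least $\frac32$ by~\eqref{eq:gap_set}, we get $c(\LLL(A),\LLL(B))=0^{+}$ or $0$, and in either case, by the closure statement following the definition of the conformance distance, $\iota(\LLL(A))\subseteq\overline{\iota(\LLL(B))}$, i.e. $\LLL(A)\subseteq\overline{\LLL(B)}$. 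Decidability of inclusion into $\overline{\LLL(B)}$ follows because $\LLL(\discta{A}),\LLL(\discta{B})$ are languages of \emph{deterministic} timed automata (Section~\ref{sec:Approx}), for which $\LLL(\discta{A})\subseteq\LLL(\discta{B})$ is decidable by complementation of $\discta{B}$ and emptiness checking.
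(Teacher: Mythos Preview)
Your triangle-inequality step is fine and gives $|\,\delta-\delta_d\,|\le 1$, where $\delta=c(\LLL(A),\LLL(B))$ and $\delta_d=c(\LLL(\discta{A}),\LLL(\discta{B}))$. The gap is in the sharpening to $\le\frac12$. You argue that the two $\frac12$ contributions from $\pi$ are strict, hence $|\,\delta-\delta_d\,|<1$, and that together with $\delta\in\frac32\ZNaturals$, $\delta_d\in\frac12\ZNaturals$ this forces uniqueness. But the strictness does \emph{not} survive the passage to the conformance distance: $c(\LLL(A),\LLL(\discta{A}))=\sup_\tau d(\tau,\pi(\tau))$ can genuinely equal $\frac12$ (take $A$ allowing an event at every time $\varepsilon\in(0,1)$; then $d(\tau,\pi(\tau))=|\frac12-\varepsilon|$ has supremum $\frac12$). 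So subadditivity gives only the non-strict bound $|\,\delta-\delta_d\,|\le 1$, and with that bound the grid argument fails: for instance $\delta_d=\frac12$ is compatible with both $\delta=0$ and $\delta=\frac32$, and $\delta_d=1$ with both $\delta=0$ and $\delta=\frac32$ as well. What you flag as ``bookkeeping'' is therefore the whole content of the estimate.

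The paper closes this gap by a different mechanism that the triangle inequality cannot see: when $|t^A-t^B|$ realizes (or approaches) $\delta\in\frac32\ZNaturals$, the fractional parts $\{t^A\}$ and $\{t^B\}$ are \emph{correlated} --- equal when $\delta\in 3\ZNaturals$, and differing by exactly $\frac12$ when $\delta\in 3\ZNaturals+\frac32$ --- so the two rounding errors do not add independently but either cancel or contribute at most $\frac12$ in total. That is what produces the table $\delta=3k\Leftrightarrow\delta_d\in\{3k-\frac12,3k,3k+\frac12\}$ and $\delta=3k+\frac32\Leftrightarrow\delta_d\in\{3k+1,3k+\frac32,3k+2\}$, and hence $|\,\delta-\delta_d\,|\le\frac12$. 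Your derivation of the two displayed implications from the main bound, and of decidability via determinizability of $\discta{B}$, is correct and a bit crisper than the paper's.
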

\begin{proof}
	$A$ and $\discta{A}$ have the same untimed language.
	The timed languages $\LLL(A)$ (with clocks running at triple speed) and $\LLL(\discta{A})$ differ from one another in that every event of a run on $A$ that occurs at time  $t$, with $t = n + \varepsilon$ and $0 < \varepsilon < 1$, occurs at the `rounded` time $n+\frac{1}{2}$ in the corresponding run on $\discta{A}$. Similarly for $B$ with respect to $\discta{B}$.
	It follows that $\delta = c(\LLL(A), \LLL(B)) = \infty$ if and only if $\delta_d = c(\LLL(\discta{A}), \LLL(\discta{B})) = \infty$.
	
	Suppose now that $\delta < \infty$.
	We know \eqref{eq:gap_set} that $\delta \in \frac{3}{2}\ZNaturals$.
	Since the timed traces of $\LLL(\discta{A})$ and $\LLL(\discta{B})$ are discretized to the set $\frac{1}{2}\ZNaturals$ then, when computing $\delta_d$ instead of $\delta$, we may have a difference of $\frac{1}{2}$ time units between the two.
	It follows that
	\begin{equation}
	\label{eq:delta}
	\delta = \begin{cases}
	3k, &\quad \mathrm{if} \; \delta_d \in \{3k-\frac{1}{2}, 3k, 3k+\frac{1}{2}\}, \\
	3k+\frac{3}{2}, &\quad \mathrm{if} \; \delta_d \in \{3k+1, 3k+\frac{3}{2}, 3k+2 \}.
	\end{cases}
	\end{equation}
	
	Let us elaborate on that.
	When $\delta$ is exactly $k$ and not $3k^{+}$ or $3k^{-}$ then it means that it is achieved on specific timed traces and not as a limit.
	That is, it refers to an even occurring at time $t^A$ on a run on $A$ and and event occurring at time $t^B$ on a run on $B$, with $| t^A - t^B | = 3k$.
	Since the fractional parts of $t^A$ and $t^B$ are identical, the discretization in the corresponding runs on $\discta{A}$ and $\discta{B}$ are identical so that they occur at times $t^{\discta{A}}$ and $t^{\discta{B}}$ with $| t^{\discta{A}} - t^{\discta{B}} | = 3k$.
	The same applies when $\delta$ is exactly $3k+\frac{3}{2}$ since we are working with a resolution of $\frac{1}{2}$.
	
	When $\delta = 3k^{+}$ or $\delta = 3k^{-}$ then it is achieved as a limit of timed traces.
	If $\delta = 3k^{+}$ then $\delta_d$ can be $3k + \frac{1}{2}$, for example, when $t^A=3$ and $t^B = 3 + \varepsilon$, $\varepsilon>0$.
	Then the discretized traces will occur at times $t^{\discta{A}}=3$ and $t^{\discta{B}}=3\frac{1}{2}$. Then by choosing a sequence of timed traces of $\LLL(B)$ the time difference can tend to $0$ while in the discretized automata it will remain $\frac{1}{2}$.
	
	The other cases of an conformance distance $\delta$ that is a limit of converging distances are analogous, but we do not go here into detail.
	
	Let us look at the last claims of the theorem.
	Suppose that $\LLL(A) \subseteq \LLL(B)$.
	Then for each timed trace of $\LLL(A)$ there is an identical timed trace of $\LLL(B)$.
	The projection to the discretized timed trace will also be identical, thus, 
	$$
	\LLL(A) \subseteq \LLL(B) \; \Rightarrow \; \LLL(\discta{A}) \subseteq \LLL(\discta{B}).
	$$
	If $\LLL(A) \nsubseteq \overline{\LLL(B)}$ then $\delta > 0$.
	By \eqref{eq:delta}, we have that $\delta_d > 0$.
	It follows that  
	$$
	 \LLL(A) \nsubseteq \overline{\LLL(B)} \; \Rightarrow \; \LLL(\discta{A}) \nsubseteq \LLL(\discta{B}).
	$$	
\end{proof}

By $\eqref{qe:lang_dist}$, a similar result to Theorem~\ref{th:inclusion_gap} holds with respect to distances between languages.

By Theorem~\ref{th:inclusion_gap}, in order to compute the conformance distance $c(\LLL(A), \LLL(B))$, we can compute $c(\LLL(\discta{A}), \LLL(\discta{B}))$, and know that we lie within an error of at most $\frac{1}{2}$ time unit.
We may assume that $\discta{A}$ 
is deterministic, as this is feasible.
It is not necessary to determinize $\discta{B}$.

The general goal in computing $c(\LLL(\discta{A}), \LLL(\discta{B}))$ is to find the timed trace of $\LLL(\discta{A})$ that is farthest from $\LLL(\discta{B})$ (or a sequence of such timed traces if the distance is $\infty$).
A heuristic approach is to play a timed game in which the player in white moves along $\discta{A}$ and tries to maximize her wins, while the player in black moves along $\discta{B}$ and tries to minimize his losses.
The players start from the initial vertex of each graph.
Then white makes a move by jumping to a vertex in $\discta{A}$ with transition label $a$, followed by a move of black on an edge in $\discta{B}$ with the same label $a$.
Next, white moves on an edge with label $a'$, followed by a move of black with the same label $a'$ and so on. At each move we record the time difference between the absolute time duration of the paths along $\discta{A}$ and along $\discta{B}$. 
The problem is that we may return to the same pair of locations $(q,q') \in \Locs^{\discta{A}} \times \Locs^{\discta{B}}$ but with a different time difference between the path along $\discta{A}$ and that along $\discta{B}$. In addition, there are moves to locations where the time is not a single value but of the form $t \geq m$.
Thus, the game may not be of finite type.
One strategy to cope with the complexity of the game is a greedy max-min algorithm: each move of white is one that maximizes the new difference in times after the following move of black that tries to minimizes the time difference.
A better, but more expensive, strategy on the part of white is to look-ahead more than one step.

So, let us then consider a seemingly easier question: is $c(\LLL(\discta{A}), \LLL(\discta{B}))$ finite or infinite?
For this question we do not need to speed-up the clocks.
An infinite conformal distance occurs in one of the following three situations.
\begin{enumerate}
	\item[\textbf{S1.}] The untimed language of $\discta{A}$ is not included in that of $\discta{B}$: there exists a path $q_0 \xrightarrow{a_1} q_1 \xrightarrow{a_2} \cdots \xrightarrow{a_n} q_n$ in $\discta{A}$, with $q_n$ an accepting location, which either cannot be realized in $\discta{B}$ with the same sequence of actions, or all such paths in $\discta{B}$ do not terminate in an accepting location.
	\item[\textbf{S2.}] There exists a path in $\discta{A}$ of the form $q_0 \xrightarrow{a_1} q_1 \xrightarrow{a_2} \cdots \xrightarrow{a_n} q_n$, where the transition $q_{n-1} \xrightarrow{a_n} q_n$ has guard $t \geq m$, whereas for any path in $\discta{B}$ of the form $q'_0 \xrightarrow{a_1} q'_1 \xrightarrow{a_2} \cdots \xrightarrow{a_n} q'_n$ the guard of the last transition $q'_{n-1} \xrightarrow{a_n} q'_n$ bounds $t$ from above.
	\item[\textbf{S3.}] For each $N \in \Naturals$ there exists a timed trace $\tau \in \LLL(\discta{A})$, such that for each $\sigma \in \LLL(\discta{B})$, $d(\tau, \sigma) > N$ and not because of S2.
\end{enumerate}  
In order to find out whether the conformance distance between $\LLL(\discta{A})$ and $\LLL(\discta{B})$ is infinite as a result of \textbf{S1} or \textbf{S2} we extend $\discta{A}$ and $\discta{B}$ as follows.

First, we add to the set $\Actions$ of actions a copy of it $\bar{\Actions} = \{ \bar{a} \, : \, a \in \Actions \}$.
Then, for each transition $q \xrightarrow{a} q'$ of $\discta{A}$ or of $\discta{B}$ with time constraint of type $t \geq m$, we add a transition $q \xrightarrow{\bar{a}} q'$ with guard $t = \infty$.
Next, we complete $\discta{B}$ by adding a location $s$ which is a 'sink': whenever there is no transition with action $b \in \Actions \cup \bar{\Actions}$ from location $q$ of $\discta{B}$, we add the transition $q \xrightarrow{b} s$.
The sink location is supplemented by self-loops of all actions.
We retain the names $\discta{A}$ and $\discta{B}$ for the resulting automata. 

In the next step we form the untimed automaton $U(\discta{A})$ which is a determinization of  $\discta{A}$ with respect to actions while ignoring the temporal part.
Similarly, we construct $U(\discta{B})$.
\begin{definition}
	\label{def:uappta_a}
	The automaton $U(\discta{A})$ is a tuple $(\Locs, Q_0, \Acc, \Actions \cup \bar{\Actions}, \Edges)$, where:
	\begin{enumerate}
		\item $\Locs \subseteq \mathcal{P}(\Locs^{\discta{A}})$ is a subset of the power set of the locations of $\discta{A}$, where $Q_0 = \{q^{\discta{A}}_0\}$ is the initial location; 
		\item $\Acc \subseteq \Locs$ is the set of accepting locations, where $Q = \{q^{\discta{A}}_1, \ldots, q^{\discta{A}}_m\}$ is accepting if at least one of the $q^{\discta{A}}_i$ is an accepting location of $\discta{A}$;
		\item $\Actions \cup \bar{\Actions}$ is the set of actions;
		\item $\Edges \subseteq \Locs \times (\Actions \cup \bar{\Actions}) \times \Locs$ is a finite set of edges of the form $(Q, a, Q')$, where
		$Q' = \{ q'^{\discta{A}} : \exists q^{\discta{A}} \in Q. \, (q^{\discta{A}}, a, q'^{\discta{A}} )\in \Trans^{\discta{A}} \}$.
	\end{enumerate}
\end{definition}
Finally, we construct a version of the untimed product automaton $U(\discta{A}) \times U(\discta{B})$ in which the accepting locations are those pairs of locations $(Q,Q')$ for which $Q$ is an accepting location of $U(\discta{A})$ but $Q'$ is not an accepting location of $U(\discta{B})$.
\begin{definition}
	\label{def:prod}
	The product automaton $U(\discta{A}) \times U(\discta{B})$ is a tuple $(\Locs, Q_0, \Acc, {\Actions \cup \bar{\Actions}}, \Edges)$, where:
	\begin{enumerate}
		\item $\Locs \subseteq \Locs^{U(\discta{A})} \times \Locs^{U(\discta{B})}$, where $Q_0 = (q^{U(\discta{A})}_0, q^{U(\discta{B})}_0)$ is the initial location; 
		\item $\Acc \subseteq \Locs$ is the set of accepting locations $(Q,Q')$, where $Q \in \Acc^{U(\discta{A})}$ and $Q' \notin \Acc^{U(\discta{B})}$;
		\item $\Actions \cup \bar{\Actions}$ is the set of actions;
		\item $\Edges \subseteq \Locs \times (\Actions \cup \bar{\Actions}) \times \Locs$ is the set of edges, where for each $(Q_1, a, Q'_1) \in \Edges^{U(\discta{A})}$ and $(Q_2, a, Q'_2) \in \Edges^{U(\discta{B})}$ we have that $((Q_1,Q_2), a, (Q'_1,Q'_2)) \in \Edges$,
	\end{enumerate}
	and $U(\discta{A}) \times U(\discta{B})$ is the connected component of the initial location.
\end{definition}
\begin{theorem}
	$c(\LLL(\discta{A}), \LLL(\discta{B})) = \infty$ as a result of \textbf{S1} or \textbf{S2} if and only if the set of accepting locations of $U(\discta{A}) \times U(\discta{B})$ is not empty.
\end{theorem}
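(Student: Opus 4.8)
The plan is to reduce the statement to a claim about the finite automata $U(\discta{A})$, $U(\discta{B})$ and their product, prove that automata-theoretic claim, and then translate it back into timed traces. The bridge claim I would establish first is: $U(\discta{A}) \times U(\discta{B})$ has a reachable accepting location if and only if the language accepted by $U(\discta{A})$ is not contained in that accepted by $U(\discta{B})$, both languages taken over $\Actions \cup \bar{\Actions}$. This is the standard product/complement argument; the one place the preparatory steps matter is that, after $\discta{B}$ is completed with the trap location $s$, the determinization $U(\discta{B})$ is a \emph{complete} deterministic automaton, so a word $w$ fails to be accepted by $U(\discta{B})$ exactly when the unique $U(\discta{B})$-run on $w$ ends outside $\Acc^{U(\discta{B})}$ — which is precisely what the product's acceptance condition ``$Q \in \Acc^{U(\discta{A})}$ and $Q' \notin \Acc^{U(\discta{B})}$'' records; since the product is restricted to the component of $Q_0$, a reachable accepting location corresponds to a word in the difference of the two languages. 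It then remains to show (i) this language non-inclusion holds precisely when \textbf{S1} or \textbf{S2} holds, and (ii) each of \textbf{S1}, \textbf{S2} forces $c(\LLL(\discta{A}),\LLL(\discta{B})) = \infty$, so that ``$c = \infty$ as a result of \textbf{S1} or \textbf{S2}'' is the same condition as ``\textbf{S1} or \textbf{S2}''.

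For the ``$\Leftarrow$'' half of (i) I would argue as follows. Restricting $U(\discta{A})$ and $U(\discta{B})$ to $\Actions$ recovers the determinizations of the original untimed automata (the barred transitions carry only barred actions, and $s$ is non-accepting), so under \textbf{S1} a word $w \in \Actions^*$ lying in the untimed language of $\discta{A}$ but not of $\discta{B}$ is already a separating word. Under \textbf{S2}, take its $\discta{A}$-path (which we may assume reaches an accepting location, as otherwise it contributes nothing to $\LLL(\discta{A})$), relabel the action $a_n$ of the distinguished last transition as $\bar{a}_n$ — legal because that transition has a guard $t \geq m$ — and append the actions of a path to an accepting location, obtaining a word $w$ accepted by $U(\discta{A})$. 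Reading $a_1 \cdots a_{n-1}$ lands $U(\discta{B})$ in the set $Q'$ of all $\discta{B}$-locations reachable by that action sequence, no location of which can carry an unbounded $a_n$-transition (such a transition would give a $\discta{B}$-path with actions $a_1 \cdots a_n$ whose last guard is not bounded above, against \textbf{S2}); hence the $\bar{a}_n$-step drops into $s$ and no accepting location follows, so $w$ is not accepted by $U(\discta{B})$.

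For the ``$\Rightarrow$'' half of (i), take a word $w$ accepted by $U(\discta{A})$ but not by $U(\discta{B})$. If $w$ has no barred letter it witnesses \textbf{S1}. Otherwise consider its first barred letter $\bar{b}$, at position $k$, and the set $Q'$ of $\discta{B}$-locations reachable by $a_1 \cdots a_{k-1}$: if no location of $Q'$ has an unbounded $b$-transition, then the length-$k$ prefix of the $\discta{A}$-path underlying $w$, read with un-barred actions, witnesses \textbf{S2} (its last transition is the unbounded one underlying the $\bar{b}$-edge, and any $\discta{B}$-path with that action sequence ends in $Q'$ through a $b$-transition, hence with a guard bounded above). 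If instead some location of $Q'$ does have an unbounded $b$-transition, $U(\discta{B})$ does not sink at step $k$; here I would iterate the analysis at later barred positions to extract \textbf{S2}, and rule out the residual possibility that the $U(\discta{B})$-run on $w$ never sinks and merely ends non-accepting by a timing argument: realize the $\discta{A}$-path of $w$ by timed traces $\tau_N$ that delay at least $N$ at each barred position and the minimum admissible amount elsewhere, so that the timestamps of $\tau_N$ are constant between consecutive barred positions; any $\sigma \in \LLL(\discta{B})$ with the same untimed word and bounded distance to $\tau_N$ then also has almost constant timestamps between those positions, so its large delays sit exactly at the barred positions, so its transitions there are unbounded, so relabelling them exhibits an accepting $U(\discta{B})$-run on $w$, contradicting the choice of $w$.

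Finally, for (ii): \textbf{S1} gives $c = \infty$ because an accepting run of $\discta{A}$ over the separating untimed word exists (the guards along any path of a discretized timed automaton are simultaneously satisfiable) and no $\LLL(\discta{B})$-trace has that untimed word; and \textbf{S2} gives $c = \infty$ via the family $\tau_N$ above, since a trace $\sigma \in \LLL(\discta{B})$ with the same untimed word has, by \textbf{S2}, its $n$-th transition bounded above, so its $n$-th timestamp exceeds its $(n-1)$-th by a constant only, while all timestamps of $\tau_N$ before step $n$ are constants, forcing $d(\tau_N, \sigma) \geq N/2 - O(1)$. I expect the main obstacle to be the residual sub-case of ``$\Rightarrow$'' of (i): reconciling the ``pattern'' language over $\Actions \cup \bar{\Actions}$, in which a barred letter forces an unbounded transition on \emph{both} sides simultaneously, with the path-based wording of \textbf{S2}; the flat-timestamp construction, which pins $\discta{B}$'s unavoidable delays onto the barred positions, is the device that makes the two descriptions agree.
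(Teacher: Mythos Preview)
Your core approach---reduce to language non-inclusion between the deterministic finite automata $U(\discta{A})$ and $U(\discta{B})$ via the product, then link this non-inclusion to \textbf{S1}/\textbf{S2}---is exactly the paper's. The paper's own proof, however, is a three-line sketch: it observes that completion makes $U(\discta{B})$ total, and then simply reads off from an accepting product path that the corresponding $\discta{A}$-path is accepting while every $\discta{B}$-path on the same word either ends in a non-accepting non-sink location, sinks on a missing $a$-edge, or sinks on a missing $\bar a$-edge. It treats this trichotomy as \emph{being} the disjunction ``\textbf{S1} or \textbf{S2}'' without further argument, and does not spell out the converse direction at all.

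Your proposal is therefore considerably more careful than the paper, and you correctly locate the only non-routine point: the ``residual sub-case'' in the $\Rightarrow$ direction of your step~(i). But your resolution of that sub-case does not quite close. The flat-timestamp argument shows that in this sub-case $c(\LLL(\discta{A}),\LLL(\discta{B}))=\infty$; it does \emph{not} produce a literal witness of \textbf{S2} as stated. Indeed, \textbf{S2} as written requires that \emph{every} $\discta{B}$-path with the given action sequence have its last guard bounded above, with no reference to acceptance; one can manufacture examples (e.g.\ $\discta{B}$ has an unbounded $a$-edge into a non-accepting dead end, alongside a bounded $a$-edge into an accepting location) where the product has an accepting state yet neither \textbf{S1} nor \textbf{S2} holds verbatim. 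So the obstacle you flag is real, and it reflects a slight looseness in the theorem's phrase ``as a result of \textbf{S1} or \textbf{S2}'': the construction detects the \emph{intended} phenomenon (an unbounded $\discta{A}$-transition not matchable by any $\discta{B}$-path that can still reach acceptance), which is marginally broader than the literal \textbf{S2}. The paper's proof sidesteps this by treating the trichotomy above as the operative meaning of \textbf{S1}/\textbf{S2}; your write-up would be cleanest if you did the same and dropped the timing detour.
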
  
\begin{proof}
	By completing $\discta{B}$ we made sure that the set of untimed traces of $U(\discta{B})$ consists of all possible traces.
	But if a path in $U(\discta{A}) \times U(\discta{B})$ terminates in an accepting location, then there exists a path in $\discta{A}$ that ends in an accepting location, while all paths in $\discta{B}$ of the same sequence of actions either terminate in a non-sink location which is non-accepting, or enter the sink either on an edge with action $a \in \Actions$ due to missing such an edge on the uncompleted $\discta{B}$ or on an edge labeled $\bar{a} \in \bar{\Actions}$ due to reaching a transition that is bounded in time in (the uncompleted) $\discta{B}$, but not bounded in $\discta{A}$.
\end{proof}

Assume now that by constructing the automaton $U(\discta{A}) \times U(\discta{B})$ it turns out that no possible infinite conformance distance exists when checking \textbf{S1} and \textbf{S2} and it remains to check \textbf{S3}.
Hence, the goal is to find a sequence of traces in $\discta{A}$ which 'run-away' from $\discta{B}$, and now we are interested in the exact delays between consecutive transitions.
This problem may be of very high complexity and even it is not clear whether it is decidable.
We will show that a (perhaps) restricted version is decidable.

First, we extend $\discta{A}$ and $\discta{B}$ with actions $\bar{\Actions}$ as before, referring to transitions that are unbounded by time.
Let $M$ be the maximal integer that appears in a transition guard of $\discta{A}$ or $\discta{B}$.
Then, each transition $q \xrightarrow{a} q'$ of $\discta{A}$ or $\discta{B}$ with time constraint $t \geq m$, $m \leq M+\frac{1}{2}$, is replaced by the transitions $q \xrightarrow{a} q'$ with delays $t = m$, $t=m+\frac{1}{2}$,..., $t=M+\frac{1}{2}$ and another transition $q \xrightarrow{\bar{a}} q'$ with delay $t = (M+1)^*$. The set of delays of $\discta{A}$ ($\discta{B}$) is denoted by $\Delays$.

In the next step we determinize $\discta{B}$ into $D(\discta{B})$.
The idea is to be able to compare each timed trace of $\discta{A}$ simultaneously with all equivalent (having the same untimed trace) time traces of $\discta{B}$.
\begin{definition}
	\label{def:detappta_b}
	The automaton $D(\discta{B})$ is a tuple $(\Locs, Q_0, \Acc, \Actions \cup \bar{\Actions}, \Trans)$, where:
	\begin{enumerate}
		\item $\Locs \subseteq \mathcal{P}(\Locs^{\discta{B}})$ is a subset of the power set of the locations of $\discta{B}$, where $Q_0 = \{q^{\discta{B}}_0\}$ is the initial location; 
		\item $\Acc \subseteq \Locs$ is the set of accepting locations, where location $Q = \{q^{\discta{B}}_1, \ldots, q^{\discta{B}}_m\}$ is accepting if at least one of the $q^{\discta{B}}_i$ is an accepting location of $\discta{B}$;
		\item $\Actions \cup \bar{\Actions}$ is the set of actions;
		\item $\Trans \subseteq \Locs \times (\Actions \cup \bar{\Actions}) \times \Edges \times \Locs$ is a finite set of transitions of the form $(Q, a, E, Q')$, where
		$$E = \{(q^{\discta{B}}, a, d, q'^{\discta{B}}) \in \Trans^{\discta{B}} \, : \,  q^{\discta{B}} \in Q, q'^{\discta{B}} \in Q', a \in \Actions \cup \bar{\Actions}, d \in \Delays	\}.$$
		and $Q'$ contains exactly the set of these target locations $q'^{\discta{A}}$.
	\end{enumerate}
\end{definition} 
	Note that the transitions of $D(\discta{B})$ retain the set of transitions of $\discta{B}$ including source and target locations.

	In the next step we make the standard construction of the product automaton $\discta{A} \times D(\discta{B})$. It has at most $L = | \Locs ^{\discta{A}} | \cdot 2^{|\Locs^{\discta{B}}|}$ locations, where each location is of the form
	$$
	 Q^{\discta{A} \times D(\discta{B})} = (q^{\discta{A}}, \{q^{\discta{B}}_1, \ldots, q^{\discta{B}}_m\}).
	$$
	Since the difference between a transition delay on $\discta{A}$ and a corresponding transition on $\discta{B}$ in parallel runs on $\discta{A}$ and $\discta{B}$ is at most $M$ time units (actually, it is $M+\frac{1}{2}$, but it makes no difference for our argument), then a run on $\discta{A} \times D(\discta{B})$ that does not visit the same location twice may result in a delay of at most $LM$ time units between its projection to $\discta{A}$ and each of its projections to $\discta{B}$.
	
	At each transition of a run on $\discta{A} \times D(\discta{B})$ we can subtract the delay of the edge of $\discta{A}$ from each of the delays of the corresponding edges of $\discta{B}$ and record at each location $q^{\discta{B}}_i$ of $Q^{\discta{A} \times D(\discta{B})}$ the set of \emph{accumulated time differences} (ATDs), that is, the differences in absolute time between the runs on $\discta{A}$ and all possible runs on $\discta{B}$ of the same untimed trace when reaching $Q^{\discta{A} \times D(\discta{B})}$.
	The ATD of the least absolute value gives the least difference in time at that location between the run on $\discta{A}$ and a corresponding run on $\discta{B}$.
	When a delay is $(M+1)^*$ (and then it is the same delay for both $\discta{A}$ and $\discta{B}$) then we denote the difference $0^+$, and this $+$ sign carries on to the next differences by defining $i^+ + j = (i+j)^+$, $i^+ + j^+ = (i+j)^{++}$, and so on. It means that $i^+$ is actually any value of $\frac{1}{2}\ZNaturals$ which is greater than or equals $i$.
	The reason for that is that for a delay $k$ in $\discta{A}$ we can choose any delay $l \geq k$ in $\discta{B}$. In order to exclude the possibility of choosing also a delay in $\discta{B}$ which is smaller than $k$ (and maybe reduce the distance between the corresponding paths in $\discta{A}$ and $\discta{B}$), each transition of $\discta{A}$ that is unbounded in time is considered as a delay of $M+1$ time units.
	Once a value of the form $i^+$ is realized as a concrete value $i+j$, for some $j \geq 0$, then in all the difference values that appear in the following locations the relevant $+$ sign is removed and the value $j$ is added.
	
	Every run $\rho$ on $\discta{A} \times D(\discta{B})$ can be uniquely written in the form
	$$
	\rho = \rho_0 \sigma_1^{i_1} \rho_1 \sigma_2^{i_2} \rho_2 \cdots \sigma_r^{i_r} \rho_r,
	$$
	for some $r \in \ZNaturals$, $i_j \in \Naturals$ and where each $\sigma_j$ is a simple cycle of positive length
	and each $\rho_j$ is without cycles and of length $0 \leq l < L$.
	We say that the number of \emph{power cycles} of $\rho$ is $r$, written $pc(\rho)=r$.
	\begin{theorem}
		It is decidable whether there exists a fixed $K \in \Naturals$, such that for every $N \in \Naturals$ there exists a timed trace $\tau \in \LLL(\discta{A})$, such that $d(\tau, \LLL(\discta{B})) > N$ and the corresponding run $\rho$ on $\discta{A} \times D(\discta{B})$ satisfies $pc(\rho) \leq K$. 
	\end{theorem}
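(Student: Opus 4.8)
The plan is to use the bound $K$ to replace the (a priori infinite-state) ``run-away game'' by the analysis of finitely many finitely-parametrised families of runs of the product automaton $\discta{A}\times D(\discta{B})$, to characterise when such a family makes $d(\cdot,\LLL(\discta{B}))$ grow unboundedly by a decidable property of finite weighted graphs, and then to remove the parameter $K$ altogether by a pigeonhole argument, leaving a finite condition on $\discta{A}\times D(\discta{B})$.

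\textbf{From a fixed $K$ to finitely many shapes.} For fixed $K$ call a \emph{shape} a sequence $\Phi=\rho_0\sigma_1\rho_1\cdots\sigma_r\rho_r$ with $r\le K$, each $\sigma_j$ a simple cycle and each $\rho_j$ an acyclic segment of $\discta{A}\times D(\discta{B})$, fitting into a run whose $\discta{A}$-projection ends in an accepting location. Since acyclic segments have length $<L$ and simple cycles length $\le L$, there are finitely many shapes, effectively enumerable, and every run $\rho$ with $pc(\rho)\le K$ instantiates one of them by an iteration vector $\vec{\imath}=(i_1,\dots,i_r)\in\Naturals^{r}$ (and a choice of concrete values for transitions carrying an $(M{+}1)^{*}$-delay). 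Hence a run-away sequence with $pc\le K$ forces, by pigeonhole, some fixed shape $\Phi$ to satisfy $\sup_{\vec{\imath}}d(\tau_{\vec{\imath}},\LLL(\discta{B}))=\infty$, and conversely any such shape yields a run-away sequence with $pc\le K$; so for each $K$ it suffices to decide whether some shape has unbounded $d$.

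\textbf{Deciding whether a shape has unbounded $d$.} Fix $\Phi$. A run instantiating $\Phi$, read together with the set of all equally-untimed $\discta{B}$-runs that $D(\discta{B})$ maintains, describes each such $\discta{B}$-run as: fixed pieces of $\discta{B}$ along the acyclic parts, and, at each of the $i_j$ iterations of a cycle-block $\sigma_j$, one of the finitely many $\discta{B}$-fragments realising the action word of $\sigma_j$. These fragments are the edges of a finite integer-weighted multigraph $G_j$ whose vertices are the $\discta{B}$-locations of the $D(\discta{B})$-component of the base vertex of $\sigma_j$ (this component is restored after each full turn of $\sigma_j$), an edge $q\to q'$ having weight $(\text{delay of }\sigma_j\text{ on }\discta{A})-(\text{delay of the fragment})$, the $(M{+}1)^{*}$ / ``$i^{+}$'' bookkeeping being placed exactly so that an unbounded transition of $\discta{A}$ cannot be matched by a shorter $\discta{B}$-delay. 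Along a $\discta{B}$-run the accumulated time difference at an intermediate step equals its entry value plus the corresponding partial sum of $G_j$-weights, and $d(\tau_{\vec{\imath}},\LLL(\discta{B}))$ is the minimum, over $\discta{B}$-runs completable to accepting, of the maximum over positions of $|{\mathrm{ATD}}|$. An analysis of how these partial sums grow then shows: $\sup_{\vec{\imath}}d(\tau_{\vec{\imath}},\LLL(\discta{B}))=\infty$ if and only if some block $\sigma_j$ is \emph{divergent}, i.e.\ for every $D$ there is a length $i$ such that every $G_j$-walk of length $i$ that can be completed to an accepting $\discta{B}$-run has a partial sum of absolute value $>D$ --- because otherwise, choosing block by block a bounded walk of the prescribed length keeps $|{\mathrm{ATD}}|$ below a fixed bound for \emph{all} $\vec{\imath}$. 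Divergence of $\sigma_j$ with respect to a fixed set of admissible endpoints is a boundedness question about integer-weighted walks of prescribed length in a finite graph, decidable through the strongly connected components of $G_j$, the minimum mean weights of their cycles, and the eventual periodicity of the admissible length-set.

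\textbf{Eliminating $K$, and the main obstacle.} Because a spike of $|{\mathrm{ATD}}|$ of height $>D$ inside $\sigma_j^{\,i}$ already forces $d(\tau,\LLL(\discta{B}))>D$ whatever precedes and follows, and because the ``completion sets'' that can occur after a given cycle take only finitely many values (a pigeonhole over these), the analysis above becomes a $K$-free statement: the property in the theorem holds if and only if $\discta{A}\times D(\discta{B})$ has a simple cycle $\sigma$, reachable from the initial location, admitting a path $\pi$ from its base to a product location with accepting $\discta{A}$-component, such that $\sigma$ is divergent with respect to the set of $\discta{B}$-endpoints from which the action word of $\pi$ can be completed to an accepting $\discta{B}$-run. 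Given such $\sigma$ and $\pi$, pumping $\sigma$ produces a run-away sequence all of whose runs have $pc\le 1+pc(\pi)$, a fixed number; conversely the derivation above extracts such $\sigma,\pi$ from any bounded-$pc$ run-away sequence (S1 having been ruled out, an accepting $\discta{B}$-run of any accepting $\discta{A}$-trace always exists, so the completion sets are non-empty). Only finitely many simple cycles, finitely many relevant completion sets, and decidable reachability checks are then involved, so the theorem follows. The real work is the divergence test for a single cycle with respect to a fixed completion set: one must handle the $\max$-over-positions nature of the trace metric (a $\discta{B}$-run may oscillate and so keep $|{\mathrm{ATD}}|$ bounded without zero net drift, so ``no zero-weight cycle'' is not the right criterion), the \emph{exact}-length constraint dictated by the iteration count (so that the lengths of the cycles of $G_j$, and not merely their weights, matter), and the open-ended $(M{+}1)^{*}$ delays that must be read as forced delays of $M{+}1$; carrying this out, and checking that $G_j$ faithfully captures divergence, is the crux.
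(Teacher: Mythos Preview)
Your overall plan --- reduce to finitely many shapes for each fixed $K$, analyse each shape via weighted walks in the fragment graphs $G_j$, then eliminate $K$ by a pigeonhole over completion sets --- is more structured than the paper's argument, and the reduction to shapes is sound. The gap is in the claimed equivalence at the heart of Step~2, namely that $\sup_{\vec{\imath}} d(\tau_{\vec{\imath}},\LLL(\discta{B}))=\infty$ if and only if \emph{some single block} $\sigma_j$ is divergent, and consequently in the single-cycle characterisation of Step~3.

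The direction ``no divergent block $\Rightarrow$ bounded distance'' does not follow from ``choosing block by block a bounded walk'', because the bounded $G_j$-walks you select need not chain: the endpoint of a bounded walk in $G_1$ need not be the starting point (after traversing $\rho_1$) from which a bounded walk in $G_2$ of the prescribed length exists. Concretely, take a shape $\rho_0\sigma_1\rho_1\sigma_2\rho_2$ where the $D(\discta{B})$-component at the base of $\sigma_1$ is $\{q_1,q_2\}$, $G_1$ has only the self-loops $q_1\to q_1$ of weight $+1$ and $q_2\to q_2$ of weight $0$, $\rho_1$ carries $q_1\mapsto q'_1$ and $q_2\mapsto q'_2$, $G_2$ has only the self-loops $q'_1\to q'_1$ of weight $0$ and $q'_2\to q'_2$ of weight $+1$, and both $q'_1,q'_2$ complete to accepting along $\rho_2$. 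Neither $\sigma_1$ nor $\sigma_2$ is divergent (the $q_2$-walk, resp.\ the $q'_1$-walk, stays bounded for every length and is completable), and for every path $\pi$ from the base of $\sigma_1$ both $q_1$ and $q_2$ complete $\pi$'s action word, so your single-cycle criterion in Step~3 also fails. Yet every $\discta{B}$-run through this shape must commit to $q_1$ or $q_2$ at the base of $\sigma_1$ and then accumulates $|\mathrm{ATD}|\ge\min(i_1,i_2)$; hence $d(\tau_{(N,N)},\LLL(\discta{B}))\to\infty$ with $pc\le 2$.

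The paper's proof handles precisely this interaction by an iterative saturation rather than a single-block test: it marks, for each simple cycle $\sigma$ at each product location, those sub-locations $q_i^{\discta{B}}$ whose minimal $|\mathrm{ATD}|$ grows unboundedly under iteration of $\sigma$, \emph{propagates} these ``unbounded'' markers along simple paths, and then \emph{re-examines} cycles at locations now reached with some sub-locations already marked unbounded --- a sub-location whose ATD changes by only a bounded amount per round of $\sigma$ becomes unbounded if its entry value is already unbounded. Iterating this propagate-and-re-check loop to a fixed point over the finite graph is exactly what captures the two-cycle example above ($\sigma_1$ first marks $q_1$; propagation marks $q'_1$; re-running $\sigma_2$ with $q'_1$ already unbounded then marks $q'_2$). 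Your approach can be repaired by testing divergence of the whole shape rather than of one block, but that is a coupled boundedness question across the concatenation of the $G_j$'s, and carrying it out essentially reproduces the paper's fixed-point computation rather than the single-graph SCC/mean-weight analysis you sketch.
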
  
	\begin{proof}
		The conformance distance $c(\LLL(\discta{A}), \LLL(\discta{B}))$ is $\infty$ if for every $N \in \Naturals$ we can reach a location $Q^{\discta{A} \times D(\discta{B})}$ with all ATDs of absolute value at least $N$.
		Since $pc(\rho) \leq K$, $K$ fixed, it is clear that the unbounded increase in the ATDs can come only from the powers of simple cycles $\sigma^{i_j}$.
		Since the number of locations of $\discta{A} \times D(\discta{B})$ is finite, all locations can be reached in a bounded number of steps.
		Then, for each location $Q^{\discta{A} \times D(\discta{B})}$ and each simple cycle $\sigma$ starting at $Q^{\discta{A} \times D(\discta{B})}$, it can be checked for which locations $q^{\discta{B}}_i$ of $Q^{\discta{A} \times D(\discta{B})}$ the minimal (in absolute value) ATD increases indefinitely when repeating the cycle $\sigma$.
		Let $P$ be the set of these locations $Q^{\discta{A} \times D(\discta{B})}$ with at least one unbounded ATD.
		
		Next we look at all the simple paths from each $Q^{\discta{A} \times D(\discta{B})} \in P$ to the other locations of $\discta{A} \times D(\discta{B})$ and update their sub-locations $q^{\discta{B}}_i$ of having an unbounded ATD.
		Moreover, when reaching a location $Q'^{\discta{A} \times D(\discta{B})} \in P$ from another location $Q^{\discta{A} \times D(\discta{B})} \in P$ then it can be checked whether new sub-locations $q^{\discta{B}}_i$ of $Q'^{\discta{A} \times D(\discta{B})}$ become with unbounded ATD when repeating a cycle $\sigma$ (even when its minimal ATD decreases by a bounded finite number at each round of $\sigma$, if we started with an unbounded value, we can end at an unbounded value). 
		This process is repeated until no improvement in the maximum number of sub-locations of unbounded ATD can be achieved.
		Since the graph is finite, the whole algorithm is finite.
		Finally, $c(\LLL(\discta{A}), \LLL(\discta{B})) = \infty$ when at some step of the algorithm a location $Q^{\discta{A} \times D(\discta{B})}$ becomes with all its sub-locations $q^{\discta{B}}_i$ of unbounded ATD.
	\end{proof}
\section{Conclusion and Suggested Future Research}
In this paper we introduced a natural definition of the distance between the languages of non-deterministic timed automata in terms of the times at which events in one automaton occur compared to the times of corresponding events in the other automaton. We showed how one can effectively construct discretized deterministic timed automata and obtain the distance between the original timed automata from the distance between the discretized versions. Consequently, the problem of language inclusion for timed automata, which is undecidable in general, is decidable if we consider the closure of the languages with respect to the Euclidean topology.

Computing the distance between timed automata may not be an easy task. We even do not know whether the finiteness of the distance is a decidable problem.
We showed, however, that under some restriction on the timed traces, this problem is decidable.

There is more than one reasonable way to define the distance between timed automata and the one we chose refers to the accumulated time difference that may occur between timed automata that are supposed to be (almost) the same or conformance distance between the language of an implementation and that of the specification.
Other possible definitions of distances like a maximal time difference on a single transition or time difference mean on simple cycles are easier to compute on the discretized automata.
For another notion of distance between implementation and specification we refer to \cite{CHR12}.    
Another interesting problem is to compute the distance between timed automata equipped with probabilities on transitions, where the distances are computed as expected values with respect to these probabilities.

\subsection*{Acknowledgements.} 
\begin{small}
	This research was partly supported by the Austrian Science Fund (FWF) Project P29355-N35. 
\end{small}

\bibliography{ta}
\bibliographystyle{amsplain}

\end{document}